\newif\ifreport
\newcommand{\nop}[1]{}
\newcommand{\NP}{\textrm{NP}\xspace}
\newcommand{\SigmaP}[1]{\ensuremath{\Sigma_{#1}^P}}
\newcommand{\Or}{\ensuremath{\mathtt{\,v\,}}\xspace}
\newcommand{\derives}{\ensuremath{\mathtt{\ :\!\!-}\ }}
\newcommand{\p}{\ensuremath{{\mathcal{P}}}}
\newcommand{\GP}{\ensuremath{Ground(\p)}}
\newcommand{\BP}{\ensuremath{B_{\p}}}
\newcommand{\UP}{\ensuremath{U_{\p}}}
\newcommand{\EDB}{\ensuremath{E\!D\!B}\xspace}
\newcommand{\IDB}{\ensuremath{I\!D\!B}\xspace}
\newcommand{\R}{\ensuremath{r}}
\newcommand{\HR}{\ensuremath{H(\R)}}
\newcommand{\BR}{\ensuremath{B(\R)}}
\newcommand{\BnR}{\ensuremath{B^-(\R)}}
\newcommand{\naf}{\ensuremath{\mathtt{not}}\xspace}
\newcommand{\dlv}{{\sc DLV}\xspace}
\newcommand{\q}{\ensuremath{{\cal Q}}}
\newcommand{\ground}[1]{\ensuremath{Ground(#1)}}
\newcommand{\head}[1]{\ensuremath{H(#1)}}
\newcommand{\body}[1]{\ensuremath{B(#1)}}
\newcommand{\posbody}[1]{\ensuremath{B^+(#1)}}
\newcommand{\negbody}[1]{\ensuremath{B^-(#1)}}
\newcommand{\atoms}[1]{\ensuremath{Atoms(#1)}}
\newenvironment{simpleprogram}[1][]
   {\vspace{-0.5ex}\begin{itemize}\item[]
      \tt
      \begin{tabbing}
      \code{#1}\ \= \kill
   }
   {\end{tabbing}\end{itemize}\vspace{-3ex}}
\newenvironment{simplealignedprogramstub}[1][]
   {\vspace{-0ex}
      \begin{tabbing}
      #1\kill
   }
   {\end{tabbing}
\vspace{-6ex}}
\newenvironment{sublabeledprogram}[1][]
   {\begin{array}{ll}\setlength{\arraycolsep}{0pt}}
   {\end{array}}
\newcommand{\code}[1]{\ensuremath{#1}}
\newenvironment{dlvcode}
  {\begin{displaymath}\begin{array}{l}}
  {\end{array}\end{displaymath}}
\newcommand{\father}{fath}
\newcommand{\ancestor}{anc}
\newcommand{\brother}{brot}
\newcommand{\related}{rel}
\newcommand{\ASP}{\ensuremath{\rm ASP}}
\newcommand{\ASPSC}{\ensuremath{\rm ASP^{\rm sc}}}
\newcommand{\dat}{\mbox{Datalog}}
\newcommand{\Ans}{{\it Ans}}
\newcommand{\F}{\mathcal{F}}
\renewcommand{\P}{\mathcal{P}}
\newcommand{\Q}{\mathcal{Q}}
\newcommand{\SM}{\mathcal{AS}}
\newcommand{\bcons}{\ensuremath{\models_b}}
\newcommand{\ccons}{\ensuremath{\models_c}}
\newcommand{\qrelation}[3]{\ensuremath{{#1}_{#2}^{#3}}}
\newcommand{\qequiv}[2]{\ensuremath{\qrelation{\equiv}{#1}{#2}}}
\newcommand{\bqequiv}[1]{\ensuremath{\qequiv{#1}{b}}}
\newcommand{\cqequiv}[1]{\ensuremath{\qequiv{#1}{c}}}
\newcommand{\magicRules}{\ensuremath{\mathit{magicRules}}}
\newcommand{\modifiedRules}{$\mathit{modifiedRules}$}
\newcommand{\magic}[1]{\ensuremath{\mathtt{magic}(#1)}}
\newcommand{\dmsqp}{\ensuremath{\DMS(\Q,\p)}}
\newcommand{\killed}[4]{\ensuremath{\mathtt{killed}^{#1}_{#3,#4}(#2)}}
\newcommand{\killedmpmp}{\ensuremath{\killed{M'}{M'}{\Q}{\p}}}
\newcommand{\killedmn}{\ensuremath{\killed{M}{N}{\Q}{\p}}}
\newcommand{\variant}[4]{\ensuremath{\mathtt{var}_{#1,#2}^{#3}(#4)}}
\newcommand{\variantqpi}[1]{\ensuremath{\variant{\Q}{\p}{#1}{I}}}
\newcommand{\magica}{*}
\newcommand{\DMS}{\ensuremath{\mathtt{DMS}}}
\renewcommand{\t}{\bar t}
\newcommand{\s}{\bar s}
\newcommand{\z}{\bar z}
\def\<{\mbox{$\langle$}}
\def\>{\mbox{$\rangle$}}
\newcounter{myenumctr}
\newcommand{\hs}{\hspace{3mm}}
\begin{document}
\pagestyle{plain}

\title{Dynamic Magic Sets for\\ Super-Consistent Answer Set Programs\thanks{This research has been partly supported by Regione Calabria and EU
under POR Calabria FESR 2007-2013 within the PIA
project of DLVSYSTEM s.r.l., and by MIUR under the PRIN project LoDeN. }}

\author{Mario Alviano \and Wolfgang Faber}
\institute{Department of Mathematics, University of Calabria, 87030 Rende (CS), Italy
\\\email{\{alviano,faber\}@mat.unical.it}
}

\maketitle

\begin{abstract}
For many practical applications of \ASP{}, for instance data
integration or planning, query answering is important, and therefore
query optimization techniques for \ASP{} are of great interest. Magic
Sets are one of these techniques, originally defined for Datalog
queries (\ASP\ without disjunction and negation). Dynamic Magic Sets
($\DMS$) are an extension of this technique, which has been proved to
be sound and complete for query answering over \ASP\ programs with
stratified negation.

A distinguishing feature of $\DMS$ is that the optimization can be
exploited also during the nondeterministic phase of \ASP\ engines. In
particular, after some assumptions have been made during the
computation, parts of the program may become irrelevant to a query
under these assumptions. This allows for dynamic pruning of the search
space, which may result in exponential performance gains.

In this paper, the correctness of $\DMS$ is formally established and
proved for brave and cautious reasoning over the class of
super-consistent \ASP\ programs (\ASPSC{} programs). \ASPSC{} programs
guarantee consistency (i.e., have answer sets) when an arbitrary set
of facts is added to them. This result generalizes the applicability
of $\DMS$, since the class of \ASPSC{} programs is richer than
\ASP\ programs with stratified negation, and in particular includes
all odd-cycle-free programs. $\DMS$ has been implemented as an
extension of \dlv, and the effectiveness of $\DMS$ for \ASPSC{}
programs is empirically confirmed by experimental results with this
system.
\end{abstract}

\section{Introduction}\label{sec:introduction}

Answer Set Programming (\ASP) is a powerful formalism for knowledge representation
and common sense reasoning \cite{bara-2002}.
Allowing disjunction in rule heads and nonmonotonic negation in bodies,
\ASP\ can express every query belonging to the complexity class $\rm\SigmaP2$
($\NP^\NP$).
For this reason, it is not surprising that \ASP\ has found several practical
applications, also encouraged by the availability of efficient inference engines, such as \dlv{}
\cite{leon-etal-2002-dlv}, GnT \cite{janh-etal-2000}, Cmodels \cite{lier-2005-lpnmr}, or ClaspD \cite{dres-etal-2008-KR}. As a matter of fact, these systems are
continuously enhanced to support novel optimization strategies, enabling them to be effective over increasingly
larger application domains. Magic Sets are one of these techniques
\cite{ullm-89,banc-etal-86,beer-rama-91}.

The goal of the original Magic Set method (defined in the field of Deductive Databases for Datalog programs,
i.e., disjunction-free positive \ASP\ programs) 
is to exploit the presence of constants in a query
for restricting the possible search space by considering only a subset
of a hypothetic program instantiation, which is sufficient to answer
the query in question. Magic sets are extensions of predicates
that make this restriction explicit. Extending these ideas to
\ASP\ faces a major challenge: While
Datalog programs are deterministic, \ASP\ programs are in general nondeterministic.

There are two basic possibilities how this nondeterminism can be dealt
with in the context of Magic Sets: The first is to consider
\emph{static} magic sets, in the sense that the definition of the
magic sets is still deterministic, and therefore the extension of the
magic set predicates is equal in each answer set.  The second
possibility is to allow \emph{dynamic} magic sets, which also allow
for non-deterministic definitions of magic sets. This means that the
extension of the magic set predicates may differ in various answer
sets, and thus can be viewed as being specialized for different answer
sets. This also mimics the architecture of current \ASP\ systems, which
are divided into a deterministic (grounding) and a non-deterministic
(model search) phase.

In \cite{alvi-etal-2009-TR} the first Dynamic Magic Set ($\DMS$)
method has been proposed and proved correct for \ASP\ with stratified negation.  
In this work, we show that this technique
can be easily extended and shown to be correct for a broader class of
programs, which we call super-consistent \ASP\ programs (\ASPSC{} programs), which includes all stratified and odd-cycle-free programs.
In more detail, the contributions are:

\begin{itemize}

 \item
  We formally establish the correctness of $\DMS$ for \ASPSC{} programs. In particular, we
  prove that the program obtained by the transformation $\DMS$ is
  query-equivalent to the original program. This result holds for both
  brave and cautious reasoning.

 \item
  We have implemented a $\DMS$ optimization module inside the \dlv
  system \cite{leon-etal-2002-dlv}. In this way, we could exploit the
  internal data-structures of the \dlv system and embed $\DMS$ in the
  core of \dlv. As a result, the technique is completely transparent
  to the end user. The implementation is available at \url{http://www.dlvsystem.com/magic/}.

 \item
  We have conducted experiments on a synthetic domain that
  highlight the potential of $\DMS$. We have compared the performance
  of the \dlv system without magic set optimization and
  with $\DMS$. The results show that $\DMS$ can yield drastically better
  performance than the non optimized evaluation.

\end{itemize}

\paragraph{Organization.} In
Section~\ref{sec:preliminaries}, syntax and semantics of \ASP\ 
are introduced. In this section, we also define \ASPSC{} programs.
In Section~\ref{sec:magic}, we show how to apply $\DMS$
to \ASPSC{} programs and formally prove its correctness.  
In Section~\ref{sec:system}, we discuss the
implementation and integration of the Magic Set method within the \dlv
system.  Experimental results are reported in
Section~\ref{sec:experiment}.  Finally,
\nop{ related work is
discussed in Section~\ref{sec:relatedwork}, and} in
Section~\ref{sec:conclusion} we draw our conclusions.

\section{Preliminaries}\label{sec:preliminaries}

In this section, we recall the basics of ASP and introduce the class
of super-consistent ASP programs (\ASPSC{} programs).

\subsection{ASP Syntax and Semantics}

A \emph{term} is either a \emph{variable} or a \emph{constant}. 
If $\tt p$ is a {\em predicate} of arity $k \geq 0$,
and $\tt t_1, \ldots, t_k$ are terms,
then $\tt p(t_1, \ldots, t_k)$ is an {\em atom}\footnote{%
We use the notation $\tt \t$ for a sequence
of terms, for referring to atoms as $\tt p(\t)$.}. 
A {\em literal} is either an atom $\tt p(\t)$ (a positive literal),
or an atom preceded by the {\em negation as failure} symbol $\tt \naf~p(\t)$
(a negative literal).
A {\em rule} $\R$ is of the form
\begin{dlvcode}
\tt p_1(\t_1) \ \Or\ \cdots \ \Or\ p_n(\t_n) \derives 
    q_1(\s_1),\ \ldots,\ q_j(\s_j),\ \naf~q_{j+1}(\s_{j+1}),\ \ldots,\ \naf~q_m(\s_m).
\end{dlvcode}%
where $\tt p_1(\t_1),\ \ldots,\ p_n(\t_n),\ q_1(\s_1),\ \ldots,\ q_m(\s_m)$ 
are atoms and $n\geq 1,$  $m\geq j\geq 0$. The
disjunction $\tt p_1(\t_1) \ \Or\ \cdots \ \Or\ p_n(\t_n)$ is the {\em head} of~\R{}, 
while the conjunction 
$\tt q_1(\s_1),\ \ldots,\ q_j(\s_j),\ \naf~q_{j+1}(\s_{j+1}),\ \ldots,\ \naf~q_m(\s_m)$ 
is the {\em body} of~\R{}.
Moreover, $\HR$ denotes the set of head atoms, while $\BR$ denotes the set of body literals.
We also use $\posbody{\R}$ and $\negbody{\R}$ for denoting
the set of atoms appearing in positive and negative body literals, respectively,
and $\atoms{\R}$ for the set $\HR \cup \posbody{\R} \cup \negbody{\R}$.
A rule $\R$ is normal (or disjunction-free) if $|\HR| = 1$, 
positive (or negation-free) if $\negbody{\R}=\emptyset$,
a {\em fact} if both $\body{\R}=\emptyset$,
$|\HR| = 1$ and no variable appears in $\HR$.

A \emph{program}
$\P$ is a finite set of rules; if all rules in it are positive (resp.\ normal), 
then $\P$ is a positive (resp.\ normal) program.
Odd-cycle-free and stratified programs constitute other two interesting classes of 
programs. A predicate $\tt p$ appearing in the head of a rule $\R$
{\em depends} on each predicate $\tt q$ such that an atom $\tt q(\s)$ belongs to $\BR$; 
if $\tt q(\s)$ belongs to $\posbody{\R}$, $\tt p$ depends on $\tt q$ positively, otherwise negatively. 
A program is \emph{odd-cycle-free} if there is no cycle of dependencies involving an
odd number of negative dependencies, while it is {\em stratified} if each
cycle of dependencies involves only positive dependencies.

Given a predicate $\tt p$, a {\em defining rule} for $\tt p$ is a rule
$\R$ such that some atom $\tt p(\t)$ belongs to $\head{\R}$. If all
defining rules of a predicate $\tt p$ are facts, then $\tt p$ is an
\EDB\ {\em predicate}; otherwise $\tt p$ is an \IDB\ {\em
predicate}\footnote{\EDB\ and \IDB stand for Extensional
Database and Intensional Database, respectively.}. 
Given a program $\p$,  
the set of rules having some IDB predicate in head
is denoted by $\IDB(\p)$, while $\EDB(\p)$ denotes the remaining rules, that is,
$\EDB(\p) = \P \setminus \IDB(\p)$.

The set of constants appearing in a program $\P$ is 
the \emph{universe} of $\P$ and is denoted by $\UP$\footnote{If $\p$ has no constants,
then an arbitrary constant is added to $\UP$.},
while the set of ground atoms constructible from predicates in $\P$ 
with elements of $\UP$ is the \emph{base}
of $\P$, denoted by $\BP$. We call a term (atom, rule, or program) 
\emph{ground} if it does not contain any variable.
A ground atom $\tt p(\t)$ (resp.\ a ground rule $\R_g$) is
an instance of an atom $\tt p(\t')$ (resp.\ of a rule $\R$) if there is a 
substitution $\vartheta$ from the variables in $\tt p(\t')$ (resp.\ in $\R$) 
to $\UP$ such that ${\tt p(\t)} = {\tt p(\t')}\vartheta$ 
(resp.\ $\R_g = \R\vartheta$). 
Given a program $\p$, $\ground{\p}$ denotes the set of all instances
of the rules in $\p$.

An \emph{interpretation} $I$ for a program $\P$ is a subset of $\BP$. A positive ground 
literal $\tt p(\t)$ is true w.r.t.\ an
interpretation $I$ if ${\tt p(\t)}\in I$; otherwise, it is false. 
A negative ground literal $\tt \naf\ p(\t)$ is true w.r.t.\ $I$ 
if and only if $\tt p(\t)$ is false w.r.t.\ $I$. 
The body of a ground rule $\R_g$ is true w.r.t.\ $I$
if and only if all the body literals of $\R_g$ are true w.r.t.\ $I$, that is,
if and only if $\posbody{\R_g} \subseteq I$ and $\negbody{\R_g} \cap I = \emptyset$. 
An interpretation $I$ {\em satisfies} a ground rule $\R_g\in \GP$ if at least one atom
in $\head{\R_g}$ is true w.r.t.\ $I$ whenever the body of $\R_g$ is true w.r.t.\ $I$. An interpretation $I$ is a
\emph{model} of a program $\P$ if $I$ satisfies all the rules in $\GP$. 

Given an interpretation $I$ for a program $\P$, the reduct of $\P$ w.r.t.\ $I$,
denoted $\ground{\p}^{I}$, is obtained by deleting from $\GP$ all 
the rules $\R_g$ with $\negbody{\R_g} \cap I = \emptyset$,
and then by removing all the negative literals from the remaining rules.
The semantics of a program $\P$ is then given by the set $\SM(\P)$ of the answer sets of
$\P$, where an interpretation $M$ is an answer set for $\P$ if and only if
$M$ is a subset-minimal model of $\ground{\P}^M$.

Given a ground atom ${\tt p(\t)}$ and a program $\P$, $\tt p(\t)$ is a cautious
(resp.\ brave)
consequence of $\P$, denoted by $\P \ccons \tt p(\t)$ (resp.\ $\P \bcons \tt p(\t)$), if ${\tt p(\t)} \in M$ for each
(resp.\ some) $M \in \SM(\P)$.
Given a \emph{query}%
\footnote{The queries considered here allow only atoms for simplicity; more complex queries can still be expressed using appropriate rules.
We assume that each constant appearing in $\q$ also appears in $\p$;
if this is not the case, then we can add to $\p$ a fact $\tt p(\t)$
such that $\tt p$ is a predicate not occurring in $\p$
and $\tt \t$ are the arguments of $\q$.}
 $\Q = {\tt g(\t)?}$, $\Ans_c(\Q,\P)$ 
(resp.\ $\Ans_b(\Q,\p)$) denotes the set of all the substitutions $\vartheta$
for the variables of ${\tt g(\t)}$
such that $\P \ccons {\tt g(\t)}\vartheta$ (resp.\ $\P \bcons {\tt g(\t)}\vartheta$). 
Two programs $\P$ and $\P'$ are cautious-equivalent (resp.\ brave-equivalent)
w.r.t.\ a query $\Q$, denoted by $\P\cqequiv{\Q} \P'$ (resp.\ $\P\bqequiv{\Q} \P'$), 
if $\Ans_c(\Q,\P \cup \F) = \Ans_c(\Q,\P' \cup \F)$ 
(resp.\ $\Ans_b(\Q,\P \cup \F) = \Ans_b(\Q,\P' \cup \F)$) is guaranteed for each
set of facts $\F$ defined over the EDB predicates of $\p$ and $\p'$.

\subsection{Super-Consistent \ASP\ Programs}

We now introduce super-consistent ASP programs (\ASPSC{} programs), 
the main class of programs studied in this paper.

\begin{definition}[\ASPSC{} programs]
A program $\p$ is \emph{super-consistent} if, for every set of facts $\F$,
the program $\p \cup \F$ is consistent, that is, $\SM(\p \cup \F) \neq \emptyset$. Let \ASPSC{} denote the set of all super-consistent programs.
\end{definition}

Deciding whether a program $\p$ is \ASPSC{} is computable.
Indeed, if $\p$ is not \ASPSC{}, then there is a set of facts $\F$ such that 
$\p \cup \F$ is inconsistent.
Such an $\F$ can be chosen among all possible sets of ground atoms
constructible by combining predicates of $\p$ with constants in 
$\UP \cup \{\xi_X \mid X \mbox{ is a variable of } \p\}$ 
(assuming different rules have different variable names and $\xi_X$ does not
belong to $\UP$):
If the inconsistency is not due (only) to atoms in $\BP$
but new constant symbols are required, then the choice of these symbols is negligible 
and the possibility to instantiate each variable with a different constant
is sufficient to trigger the inconsistency.

\ASPSC{} programs constitute an interesting class of programs,
properly including odd-cycle-free programs
(hence also stratified programs). 
Indeed, every odd-cycle-free program admits at least one answer set
and remains odd-cycle-free even if an arbitrary set of facts is added
to its rules. On the other hand,
there are programs having odd-cycles that are \ASPSC{}.

\begin{example}\label{ex:aspsc_odd}
Consider the following program:
\begin{dlvcode}
\tt a\ \Or\ b. \qquad \tt a\ \derives\ \naf~a,\ \naf~b.
\end{dlvcode}%
Even if an odd-cycle involving $\tt a$ is present in the dependency graph,
the program above is \ASPSC{}.
Indeed, the first rule assures that the body of the second rule is false in
every model, then annihilating the odd-cycle.
\hfill $\Box$
\end{example}

\section{Magic-Set Techniques}\label{sec:magic}

The Magic Set method is a strategy for simulating the top-down
evaluation of a query by modifying the original program by means of
additional rules, which narrow the computation to what is relevant for
answering the query.  
Dynamic Magic Sets ($\DMS$) are an extension of this technique, 
which has been proved to be sound and complete for query answering over 
\ASP\ programs with stratified negation.

In this section, we first recall the $\DMS$ algorithm, as presented in
\cite{alvi-etal-2009-TR}.
We then show how to apply $\DMS$ to \ASPSC{} programs and formally prove 
the correctness of query answering for this class.

\subsection{Dynamic Magic Sets}
\label{sec:msStrat}

The method of \cite{alvi-etal-2009-TR}\footnote{For
a detailed description of the standard magic set technique 
we refer to \cite{ullm-89}.} is structured in three main phases.

\noindent
\textbf{(1) Adornment.} The key idea is to materialize the binding information for IDB predicates that would be
propagated during a top-down computation,
like for instance the one adopted by Prolog. According to this kind of evaluation, 
all the rules $\R$ such that ${\tt g(\t')} \in \HR$ (where ${\tt g(\t')}\vartheta = \Q$ for some
substitution $\vartheta$) 
are considered in a first step. Then the atoms in $\atoms{\R\vartheta}$ 
different from $\Q$ are considered as new queries and the procedure is iterated.

Note that during this process the information about \emph{bound}
(i.e.\ non-variable) arguments in the query is ``passed'' to the other
atoms in the rule. Moreover, it is assumed that the rule is processed in
a certain sequence, and processing an atom may bind some of its
arguments for subsequently considered atoms, thus ``generating'' and
``passing'' bindings.  Therefore, whenever an atom is processed, each
of its argument is considered to be either \emph{bound} ($\tt b$) or
\emph{free} ($\tt f$).

The specific propagation strategy adopted in a top-down evaluation scheme is called {\em sideways information
passing strategy} (SIPS), which is just a way of formalizing a partial ordering over the atoms of each rule
together with the specification of how the bindings originate and propagate
\cite{beer-rama-91,grec-2003}.
Thus, in this phase, adornments are first created for the query predicate.
Then each adorned predicate is used to propagate its information to the other atoms of the rules defining it
according to a SIPS, thereby simulating a top-down evaluation. 
While adorning rules, novel binding information in the form of yet unseen adorned predicates may be generated, which should be used
for adorning other rules.

\noindent
\textbf{(2) Generation.} The adorned rules are then used to generate
{\em magic rules} defining {\em magic predicates}, which represent the atoms relevant for answering the input query.
The bodies of magic rules contain the atoms required for binding
the arguments of some atom, following the adopted SIPS.

\noindent
\textbf{(3) Modification.} Subsequently, magic atoms are added to the bodies of the adorned rules in order to
 limit the range of the head variables, thus avoiding the inference of facts which are irrelevant for the query. The resulting rules are called {\em modified rules}.

The complete rewritten program consists of the magic and modified rules
(together with the original EDB).  Given a stratified program $\P$, a 
query $\Q$, and the rewritten program $\P'$, $\P$
and $\P'$ are equivalent w.r.t.\ $\Q$, i.e., $\P\bqequiv{\Q} {\P'}$ and $\P\cqequiv{\Q} {\P'}$ hold
\cite{alvi-etal-2009-TR}.

\subsection{Applying $\DMS$ to \ASPSC{} Programs}
\label{sec:ms}

\begin{figure}[t]
 \centering
 \fbox{\hspace{2mm}\parbox{0.88\textwidth}{\scriptsize
  \begin{description}
  \item[Input:] An \ASPSC{} program $\P$, and a query $\Q=\tt g(\t)?$
  \item[Output:] The optimized program $\DMS(\Q,\P)$.
  \item[var]  $S$: \textbf{set} of adorned predicates;\ \modifiedRules$_{\Q,\P}$,\magicRules$_{\Q,\P}$: \textbf{set} of rules;
  \item[begin] \
  \item[]\emph{\phantom{0}1.}\ \ $S$ := $\emptyset$;\ \ \modifiedRules$_{\Q,\P}$ := $\emptyset$;\ \ \magicRules$_{\Q,\P}$ := \{\textbf{\emph{BuildQuerySeed}}($\Q,\p,S$)\};
  \item[]\emph{\phantom{0}2.}\ \ \textbf{while} $S\neq \emptyset$ \textbf{do}
  \item[]\emph{\phantom{0}3.}\ \ \hs $\tt p^\alpha$\ := an element of $S$; $\quad S$ := $S \setminus \{{\tt p^\alpha}\}$;
  \item[]\emph{\phantom{0}4.}\ \ \hs \textbf{for each} rule $\R \in \P$ and \textbf{for each} atom $\tt p(\t)$ $\in \HR$ \textbf{do}
  \item[]\emph{\phantom{0}5.}\, \ \hs \hs $\R^{a}$:=\textbf{\emph{Adorn}}$(\R,{\tt p^\alpha},S)$;
  \item[]\emph{\phantom{0}6.}\, \ \hs \hs \magicRules$_{\Q,\P}$\ := \magicRules$_{\Q,\P}$ \ $\bigcup$ \textbf{\emph{Generate}}$(\R^a)$;
  \item[]\emph{\phantom{0}7.} \ \hs \hs \modifiedRules$_{\Q,\P}$\ := \modifiedRules$_{\Q,\P}$ \ $\bigcup$ $\{$\,\textbf{\emph{Modify}}$(\R^a)$\,$\}$;
  \item[]\emph{\phantom{0}8.}\ \ \hs \textbf{end for}
  \item[]\emph{\phantom{0}9.}\ \ \textbf{end while}
  \item[]\emph{10.}  $\DMS(\Q,\P)$:=\magicRules$_{\Q,\P}$ \ $\cup$ \modifiedRules$_{\Q,\P}$ $\cup\ \EDB(\p)$;
  \item[]\emph{11.}  \textbf{return} $\DMS(\Q,\P)$;
  \item[end.] \
  \end{description}
 }}
 \caption{Dynamic Magic Set algorithm ($\DMS$) for \ASPSC{} programs.}\label{fig:DMS}
\end{figure}

The algorithm $\DMS$ implementing the Magic-Set technique described in the previous
section is reported in Figure~\ref{fig:DMS}.
The algorithm exploits a set $S$ for storing all the adorned predicates to
be used for propagating the binding of the query and, after all the adorned
predicates are processed,
outputs a rewritten program $\DMS(\Q,\P)$ consisting of a set of
\emph{modified} and \emph{magic} rules, stored by means of the sets 
\modifiedRules$_{\Q,\P}$ and \magicRules$_{\Q,\P}$, respectively.

We note that, even if the $\DMS$ method is presented for stratified \ASP\ programs,
this restriction is not required by the algorithm.
Indeed, in \cite{alvi-etal-2009-TR}, stratification is only used to prove
query equivalence of the rewritten program with the original program.
Here we claim that $\DMS$ can be correctly applied to a larger class of programs,
precisely that of \ASPSC{} programs.

We now describe the applicability of $\DMS$ to \ASPSC{} programs, and in the
next section we will prove its correctness for this class of programs.
For illustrating the technique we will use the following running example.

\begin{example}[Related \cite{grec-2003}]\label{running}
A genealogy graph storing information of
relationship (father/brother) among people is given,
from which a non-deterministic ``ancestor'' relation can be derived.
Assuming the genealogy graph is encoded by facts $\tt \related(p_1,p_2)$ 
when $\tt p_1$ is known to be related to $\tt p_2$, that is, 
when $\tt p_1$ is the father or a brother of $\tt p_2$,
the {\em ancestor} relation can be derived by the following \ASPSC{} 
program $\p_{rel}$:
\begin{dlvcode}
\R_1:\quad \tt \father(X,Y)\ \derives\ \related(X,Y),\ \naf~\brother(X,Y). \\
\R_2:\quad \tt \brother(X,Y)\ \derives\ \related(X,Y),\ \naf~\father(X,Y). \\
\R_3:\quad \tt \ancestor(X,Y) \derives\ \father (X,Y). \\
\R_4:\quad \tt \ancestor(X,Y) \derives\ \father(X,Z),\ \ancestor(Z,Y).
\end{dlvcode}%

Given two people $\tt p_1$ and $\tt p_2$, we consider
a query $\q_{rel} = \tt \ancestor(p_1,p_2)?$ asking whether $\tt p_1$ is an ancestor of $\tt p_2$. \hfill $\Box$
\end{example}

The computation starts in step \emph{1} by initializing $S$ and \modifiedRules$_{\Q,\P}$ to the empty set. Then the
function \textbf{\emph{BuildQuerySeed}}$(\Q,\P,S)$ is used for storing the magic seed 
$\tt \magic{\tt g^\alpha(\t)}.$ in \magicRules$_{\Q,\P}$, 
where $\alpha$ is a string having a $\tt b$ in position $i$ if $\tt t_i$ is a constant,
or an $\tt f$ if $\tt t_i$ is a variable.
Intuitively, the magic seed states that atoms matching the input query are relevant.
In addition, \textbf{\emph{BuildQuerySeed}}$(\Q,\P,S)$ adds the adorned predicate
$\tt magic\_g^\alpha$ into the set $S$. 

\begin{example}\label{runningseed}
Given the query $\Q_{rel} = \tt \ancestor(p_1,p_2)?$ and the program $\p_{rel}$, 
\textbf{\emph{BuildQuerySeed}}$(\Q_{rel},\P_{rel},S)$ creates
the fact $\tt magic\_\ancestor^{bb}(p_1,p_2).$ and inserts $\tt \ancestor^{bb}$ in $S$. \hfill $\Box$
\end{example}

The core of the algorithm (steps \emph{2--9}) is repeated until the set $S$ is empty, i.e., until there is no
further adorned predicate to be propagated. In particular, an adorned predicate $\tt p^\alpha$ is removed from
$S$ in step \emph{3}, and its binding is propagated in each (disjunctive) rule $\R \in \P$ of the form
\begin{dlvcode}
\R: \ \tt p(\t) \ \Or\ p_1(\t_1) \ \Or\ \cdots \ \Or\ p_n(\t_n) \derives 
    q_1(\s_1),\ \ldots,\ q_j(\s_j), \\
    \quad\quad\quad\quad\quad\quad\quad\quad\quad\quad\quad\ \
    \tt \naf~q_{j+1}(\s_{j+1}),\ \ldots,\ \naf~q_m(\s_m).
\end{dlvcode}
(with $\tt n\geq 0$) having an atom $\tt p(\t)$ in the head (note that the rule
$\R$ is processed as often as head atoms with predicate
$\tt p$ occur; steps \emph{4--8}).

\noindent \textbf{(1) Adornment.} Step \emph{5} implements the adornment of the rule
according to a fixed SIPS specifically conceived for
disjunctive programs.

\begin{definition}[SIPS]
\label{def:sip2} A {\em SIPS} for a rule $\R$ w.r.t.\ a binding $\tt \alpha$ for an atom ${\tt p(\t)} \in
\HR$ is a pair $(\prec^{\tt p^\alpha(\t)}_r,f^{\tt p^\alpha(\t)}_r)$, where:
\begin{enumerate}
  \item $\prec^{\tt p^\alpha(\t)}_r$ is a strict partial order over the atoms in $\atoms{\R}$, such that:
  \begin{enumerate}
  \item ${\tt p(\t)} \prec^{\tt p^\alpha(\t)}_r {\tt q(\s)}$, for all atoms ${\tt q(\s)}\in \atoms{\R}$
        different from ${\tt p(\t)}$;
  \item for each pair of atoms ${\tt q(\s)} \in (\HR \setminus \{{\tt p(\t)}\}) \cup \BnR$
        and ${\tt b(\z)} \in \atoms{\R}$,
        ${\tt q(\s)} \prec^{\tt p^\alpha(\t)}_r {\tt b(\z)}$ does not hold; and,
  \end{enumerate}

  \item  $f^{\tt p^\alpha(\t)}_r$ is a function assigning to each atom ${\tt q(\s)}\in \atoms{\R}$ a subset of the variables in $\tt \s$---intuitively,
  those made bound when processing ${\tt q(\s)}$.
\end{enumerate}
\end{definition}

The adornments for a rule $\R$ w.r.t.\ an (adorned) head atom $\tt p^\alpha(\t)$ 
are precisely dictated by $(\prec^{\tt p^\alpha(\t)}_r,f^{\tt
p^\alpha(\t)}_r)$; in particular,
a variable $\tt X$ of an atom $\tt q(\s)$ in $\R$ is bound if and only if either:

\begin{enumerate}
\item ${\tt X}\in f^{\tt p^\alpha(\t)}_r({\tt q(\s)})$ with ${\tt q(\s)} = {\tt p(\t)}$; or,

\item ${\tt X}\in f^{\tt p^\alpha(\t)}_r({\tt b(\bar z)})$ for an atom 
${\tt b(\bar z)}\in \posbody{\R}$ such that 
${\tt b(\bar z)} \prec^{\tt p^\alpha(\t)}_r {\tt q(\s)}$ holds.
\end{enumerate}

The function \textbf{\emph{Adorn}}$(\R,{\tt p^\alpha},S)$ produces an adorned disjunctive rule $\R^a$ 
from an adorned predicate $\tt p^\alpha$ and a suitable
unadorned rule $\R$,
by inserting all newly adorned
predicates in $S$. Hence, in step \emph{5} the rule $\R^a$ is of the form
\begin{dlvcode}
\R^a: \ \tt p^\alpha(\t)\,\Or\,p_1^{\alpha_1}(\t_1)\,\Or\,\cdots\,\Or\,p_n^{\alpha_n}(\t_n) \derives 
    q_1^{\beta_1}(\s_1),\ \ldots,\ q_j^{\beta_j}(\s_j), \\
    \quad\quad\quad\quad\quad\quad\quad\quad\quad\quad\quad\quad\quad\ \ 
    \tt \naf~q_{j+1}^{\beta_{j+1}}(\s_{j+1}),\ \ldots,\ \naf~q_m^{\beta_m}(\s_m).
\end{dlvcode}%
where each $\tt \alpha_1, \ldots, \alpha_n, \beta_1, \ldots, \beta_m$ is either
a string representing the bindings defined in 1. and 2. above (for IDB atoms), 
or the empty string (for EDB atoms).

\begin{example}\label{running2}
Let us resume from Example~\ref{runningseed}.
We are supposing the adopted SIPS is passing the bindings
whenever possible, in particular

\[
\begin{array}{lll}
\begin{array}{rcl}
{\tt \father(X,Y)} & \prec^{\tt \father^{bb}(X,Y)}_{\R_1} & {\tt \related(X,Y)}\\
{\tt \father(X,Y)} & \prec^{\tt \father^{bb}(X,Y)}_{\R_1} & {\tt \brother(X,Y)}\vspace{1em}\\
{\tt \father(X,Y)} & \prec^{\tt \father^{bf}(X,Y)}_{\R_1} & {\tt \related(X,Y)}\\
{\tt \father(X,Y)} & \prec^{\tt \father^{bf}(X,Y)}_{\R_1} & {\tt \brother(X,Y)}\\
{\tt \related(X,Y)} & \prec^{\tt \father^{bf}(X,Y)}_{\R_1} & {\tt \brother(X,Y)}\vspace{1em}\\
{\tt \brother(X,Y)} & \prec^{\tt \brother^{bb}(X,Y)}_{\R_2} & {\tt \related(X,Y)}\\
{\tt \brother(X,Y)} & \prec^{\tt \brother^{bb}(X,Y)}_{\R_2} & {\tt \father(X,Y)}\vspace{1em}\\
{\tt \ancestor(X,Y)} & \prec^{\tt \ancestor^{bb}(X,Y)}_{\R_3} & {\tt \father(X,Y)}\vspace{1em}\\
{\tt \ancestor(X,Y)} & \prec^{\tt \ancestor^{bb}(X,Y)}_{\R_4} & {\tt \father(X,Z)}\\
{\tt \ancestor(X,Y)} & \prec^{\tt \ancestor^{bb}(X,Y)}_{\R_4} & {\tt \ancestor(Z,Y)}\\
{\tt \father(X,Z)} & \prec^{\tt \ancestor^{bb}(X,Y)}_{\R_4} & {\tt \ancestor(Z,Y)}\\
\end{array}
& \qquad &
\begin{array}{l}
f^{\tt \father^{bb}(X,Y)}_{\R_1}({\tt \father(X,Y)}) =  \{{\tt X,Y}\}\\
f^{\tt \father^{bb}(X,Y)}_{\R_1}({\tt \related(X,Y)}) =  \{{\tt X,Y}\}\\
f^{\tt \father^{bb}(X,Y)}_{\R_1}({\tt \brother(X,Y)}) =  \{{\tt X,Y}\}\vspace{.75em}\\
f^{\tt \father^{bf}(X,Y)}_{\R_1}({\tt \father(X,Y)}) =  \{{\tt X}\}\\
f^{\tt \father^{bf}(X,Y)}_{\R_1}({\tt \related(X,Y)}) =  \{{\tt X,Y}\}\\
f^{\tt \father^{bf}(X,Y)}_{\R_1}({\tt \brother(X,Y)}) =  \{{\tt X,Y}\}\vspace{.75em}\\
f^{\tt \brother^{bb}(X,Y)}_{\R_2}({\tt \brother(X,Y)}) =  \{{\tt X,Y}\}\\
f^{\tt \brother^{bb}(X,Y)}_{\R_2}({\tt \related(X,Y)}) =  \{{\tt X,Y}\}\\
f^{\tt \brother^{bb}(X,Y)}_{\R_2}({\tt \father(X,Y)}) =  \{{\tt X,Y}\}\vspace{.75em}\\
f^{\tt \ancestor^{bb}(X,Y)}_{\R_3}({\tt \ancestor(X,Y)}) =  \{{\tt X,Y}\}\\
f^{\tt \ancestor^{bb}(X,Y)}_{\R_3}({\tt \father(X,Y)}) =  \{{\tt X,Y}\}\vspace{.75em}\\
f^{\tt \ancestor^{bb}(X,Y)}_{\R_4}({\tt \ancestor(X,Y)}) =  \{{\tt X,Y}\}\\
f^{\tt \ancestor^{bb}(X,Y)}_{\R_4}({\tt \father(X,Z)}) =  \{{\tt X,Z}\}\\
f^{\tt \ancestor^{bb}(X,Y)}_{\R_4}({\tt \ancestor(Z,Y)}) =  \{{\tt Z,Y}\}
\end{array}
\end{array}
\]

When $\tt \ancestor^{bb}$ is removed from the set $S$, $\R_3$ and $\R_4$\footnote{
Note that, according to the SIPS described above, variable $\tt Z$ in $\tt \ancestor(Z,Y)$
is considered bound because of ${\tt \father(X,Z)} \prec^{\tt \ancestor^{bb}(X,Y)}_{\R_4} {\tt \ancestor(Z,Y)}$
and  $f^{\tt \ancestor^{bb}(X,Y)}_{\R_4}({\tt \father(X,Z)}) =  \{{\tt X,Z}\}$. 
Choosing a different SIPS would result in a different (still correct) program.} are adorned:
\begin{dlvcode}
\R_{3}^a:\ \tt \ancestor^{bb}(X,Y)\ \derives\ \father^{bb}(X,Y).\\
\R_{4}^a:\ \tt \ancestor^{bb}(X,Y)\ \derives\ \father^{bf}(X,Z),\ \ancestor^{bb}(Z,Y).
\end{dlvcode}%
The adorned predicates $\tt \father^{bb}$
and $\tt \father^{bf}$ are added to $S$.
Then, $\tt \father^{bb}$ is removed from $S$ and $\R_1$ is adorned:
\begin{dlvcode}
\R_{1,1}^a:\ \tt \father^{bb}(X,Y)\ \derives\ \related(X,Y),\ \naf~\brother^{bb}(X,Y).
\end{dlvcode}%
Thus, $\tt \brother^{bb}$ is added to $S$.
We then remove $\tt \father^{bf}$ from $S$ and adorn $\R_1$:
\begin{dlvcode}
\R_{1,2}^a:\ \tt \father^{bf}(X,Y)\ \derives\ \related(X,Y),\ \naf~\brother^{bb}(X,Y).
\end{dlvcode}%
In this case nothing is added to $S$.
Finally, $\tt \brother^{bb}$ is removed from $S$ and $\R_2$ is adorned:
\begin{dlvcode}
\R_{2}^a:\ \tt \brother^{bb}(X,Y)\ \derives\ \related(X,Y),\ \naf~\father^{bb}(X,Y).
\end{dlvcode}%
\hfill $\Box$
\end{example}

\noindent \textbf{(2) Generation.} The algorithm uses the adorned rules for generating and collecting the
magic rules in step \emph{6}. 
For an adorned atom $\tt p^\alpha(\bar t)$, let $\magic{\tt p^\alpha(\bar t)}$ be its \emph{magic version}
defined as the atom $\tt magic\_p^\alpha(\bar t')$, where $\tt \bar t'$ is obtained from $\tt \bar t$ by
eliminating all arguments corresponding to an $\tt f$ label in $\tt \alpha$, and where $\tt magic\_p^\alpha$ is
a new predicate symbol (for simplicity denoted by attaching the prefix ``$\tt magic\_$'' to the predicate symbol 
$\tt p^\alpha$).
Then, if $\tt q_i^{\beta_i}(\s_i)$ is an adorned atom 
(i.e., $\beta_i$ is not the empty string)
in an adorned rule $\R^a$ having $\tt p^\alpha(\t)$ in head,
\textbf{\emph{Generate}}$(\R^a)$ produces a
magic rule $\R^\magica$ such that (i) $\head{\R^\magica} = \{\magic{\tt q_i^{\beta_i}(\s_i)}\}$ and (ii) $\body{\R^\magica}$
is the union of $\{\magic{\tt p^\alpha(\t)}\}$ and the set of all the atoms 
${\tt q_j^{\beta_j}(\s_j)} \in \atoms{\R}$ such that ${\tt q_j(\s_j)} \prec^{\tt \alpha}_r {\tt q_i(\s_i)}$. 

\begin{example}\label{running3}
In the program of Example~\ref{running2}, the magic rules produced are 
\begin{dlvcode}
\R_{3\phantom{,1}}^\magica:\ \tt magic\_\father^{bb}(X,Y)\ \derives\ magic\_\ancestor^{bb}(X,Y).\\
\R_{4,1}^\magica:\ \tt magic\_\father^{bf}(X)\ \derives\ magic\_\ancestor^{bb}(X,Y).\\
\R_{4,2}^\magica:\ \tt magic\_ \ancestor^{bb}(Z,Y)\ \derives\ magic\_\ancestor^{bb}(X,Y),\ \father(X,Z).\\
\R_{1,1}^\magica:\ \tt magic\_\brother^{bb}(X,Y)\ \derives\ magic\_\father^{bb}(X,Y).\\
\R_{1,2}^\magica:\ \tt magic\_\brother^{bb}(X,Y)\ \derives\ magic\_\father^{bf}(X),\ \related(X,Y).\\
\R_{2\phantom{,1}}^\magica:\ \tt magic\_\father^{bb}(X,Y)\ \derives\ magic\_\brother^{bb}(X,Y).
\end{dlvcode}%
\hfill $\Box$
\end{example}

\noindent \textbf{(3) Modification.} In step \emph{7} the modified rules are generated and collected.
A modified rule $\R'$ is obtained from an adorned rule $\R^a$ by adding to its body
a magic atom $\magic{\tt p^\alpha(\t)}$ for each atom ${\tt p^\alpha(\t)} \in \head{\R^a}$
and by stripping off the adornments of the original atoms.
Hence, the function
\textbf{\emph{Modify}}$(\R^a)$ constructs a rule $\R'$ of the form
\begin{dlvcode}
\R': \ \tt p(\t)\,\Or\,p_1(\t_1)\,\Or\,\cdots\,\Or\,p_n(\t_n) \derives \magic{\tt p^\alpha(\t)},
     \magic{\tt p_1^{\alpha_1}(\t_1)}, \ldots, \\
     \quad\quad\ \ 
     \magic{\tt p_n^{\alpha_n}(\t_n)},
     \tt q_1(\s_1),\ldots,q_j(\s_j),
     \tt \naf~q_{j+1}(\s_{j+1}),\ \ldots,\ \naf~q_m(\s_m).
\end{dlvcode}%

Finally, after all the adorned predicates have been processed, the algorithm outputs the program
$\DMS(\Q,\P)$.

\begin{example}\label{running4}
In our running example, we derive the
following set of modified rules:
\begin{dlvcode}
\R_{3\phantom{,1}}':\ \tt \ancestor(X,Y)\ \derives magic\_\ancestor^{bb}(X,Y),\ \father(X,Y).\\
\R_{4\phantom{,1}}':\ \tt \ancestor(X,Y)\ \derives magic\_\ancestor^{bb}(X,Y),\ \father(X,Z),\ \ancestor(Z,Y).\\
\R_{1,1}':\ \tt \father(X,Y)\ \derives magic\_\father^{bb}(X,Y),\ \related(X,Y),\ \naf~\brother(X,Y).\\
\R_{1,2}':\ \tt \father(X,Y)\ \derives magic\_\father^{bf}(X,Y),\ \related(X,Y),\ \naf~\brother(X,Y).\\
\R_{2\phantom{,1}}':\ \tt \brother(X,Y)\ \derives magic\_\brother^{bb}(X,Y),\ \related(X,Y),\ \naf~\father(X,Y).
\end{dlvcode}%

The optimized program
$\DMS(\Q_{rel},\P_{rel})$ comprises the above modified rules as well as  the magic rules in
Example~\ref{running3}, and the magic seed $\tt magic\_\ancestor^{bb}(p_1,p_2).$ 
(together with the original EDB). \hfill $\Box$
\end{example}

\subsection{Query Equivalence Results}\label{sec:teoria}

We conclude the presentation of the $\DMS$ algorithm by formally proving its 
correctness. This section essentially follows \cite{alvi-etal-2009-TR}, to which
we refer for the details, while here we highlight the necessary considerations 
for generalizing the results of \cite{alvi-etal-2009-TR} to \ASPSC{} queries. 
Throughout this section, we
use the well established notion of unfounded set for disjunctive programs with negation
defined in \cite{leon-etal-97b}. Since we deal with total interpretations, 
represented as the set of atoms interpreted as true, the
definition of unfounded set can be restated as follows.

\begin{definition}[Unfounded sets]
\label{def:unfoundedset} Let $I$ be an interpretation for a program $\p$, and $X \subseteq \BP$
be a set of ground atoms. Then $X$ is an \emph{unfounded set} for $\p$ w.r.t.\ $I$ if and only if for each ground rule
$\R_g \in \ground{\p}$ with $X \cap \head{\R_g} \neq \emptyset$, either $(1.a)$ $\posbody{\R_g} \not\subseteq I$, 
or $(1.b)$ $\negbody{\R_g} \cap I \neq \emptyset$, or
$(2)$ $\posbody{\R_g} \cap X \neq \emptyset$, or $(3)$ $\head{\R_g} \cap (I \setminus X) \neq \emptyset$.
\end{definition}

Intuitively, conditions $(1.a)$, $(1.b)$ and $(3)$ check if the rule is satisfied by $I$ regardless of the atoms in $X$,
while condition $(2)$ assures that the rule can be satisfied by taking the atoms in $X$ as false.
Therefore, the next theorem immediately follows from the characterization of
unfounded sets in~\cite{leon-etal-97b}.

\begin{theorem}\label{theo:unfounded}
Let $I$ be an interpretation for a program $\p$. Then, for any answer 
set $M \supseteq I$ of $\p$, and for each unfounded set $X$ of $\p$ w.r.t.\ $I$, 
$M \cap X = \emptyset$ holds. 
\end{theorem}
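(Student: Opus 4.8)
The plan is to argue by contradiction, exploiting the minimality built into the answer-set definition: $M$ is a subset-minimal model of the reduct $\ground{\p}^M$. Assuming $M \cap X \neq \emptyset$, I would set $M' := M \setminus X$, which is then a proper subset of $M$, and try to show that $M'$ is still a model of $\ground{\p}^M$, thereby contradicting minimality. The entire difficulty is thus concentrated in establishing that $M'$ models the reduct.

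To check this, take any rule of $\ground{\p}^M$ arising from some $\R_g \in \ground{\p}$ with $\negbody{\R_g} \cap M = \emptyset$, and suppose its reduced body holds in $M'$, that is, $\posbody{\R_g} \subseteq M'$; the goal is $\head{\R_g} \cap M' \neq \emptyset$. Since $\posbody{\R_g} \subseteq M' \subseteq M$ and $\negbody{\R_g} \cap M = \emptyset$, the body of $\R_g$ is true in $M$, and as $M$ is a model of $\p$ we obtain $\head{\R_g} \cap M \neq \emptyset$. If $\head{\R_g} \cap X = \emptyset$, this witness already lies in $M' = M \setminus X$ and we are finished; otherwise $\head{\R_g} \cap X \neq \emptyset$, and Definition~\ref{def:unfoundedset} guarantees that one of $(1.a)$, $(1.b)$, $(2)$, $(3)$ holds for $\R_g$ w.r.t.\ $I$.

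The pleasant part is that two of these alternatives are immediately ruled out: $(1.b)$ is impossible because $\negbody{\R_g} \cap M = \emptyset$ together with $I \subseteq M$ forces $\negbody{\R_g} \cap I = \emptyset$, and $(2)$ is impossible because $\posbody{\R_g} \subseteq M \setminus X$ forces $\posbody{\R_g} \cap X = \emptyset$. Alternative $(3)$ is exactly favourable: a witness in $\head{\R_g} \cap (I \setminus X)$ lies in $M \setminus X = M'$ because $I \subseteq M$. Hence the contradiction closes the moment alternative $(1.a)$ can also be excluded.

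The main obstacle is precisely $(1.a)$, namely $\posbody{\R_g} \not\subseteq I$. This is perfectly consistent with $\posbody{\R_g} \subseteq M$ whenever $\posbody{\R_g}$ meets $M \setminus I$, and in that situation $\R_g$ has an active body in the reduct yet its entire head may sit inside $X$, so the rule-by-rule analysis does not, on its own, produce a head atom surviving in $M'$. The point where the argument must bite is therefore the passage from the reference interpretation $I$ to $M$: the conditions of Definition~\ref{def:unfoundedset} must effectively be read against $M$, so that $(1.a)$ becomes $\posbody{\R_g} \not\subseteq M$ and is excluded exactly for the rules whose body is true in $M$, while $(1.b)$ and $(3)$ only become easier to satisfy as $I$ is enlarged toward $M$. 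Concretely, I would complete the proof by invoking the unfounded-free characterization of answer sets of \cite{leon-etal-97b} at the interpretation $M$; justifying that this upward passage from $I$ to $M$ is legitimate, and in particular neutralising $(1.a)$, is the delicate step and the one that genuinely relies on the relationship $I \subseteq M$.
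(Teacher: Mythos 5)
Your dissection of Definition~\ref{def:unfoundedset} is accurate as far as it goes, and you have located the crux precisely: conditions $(1.b)$, $(2)$ and $(3)$ lift from $I$ to any $M \supseteq I$ (the first and third because truth is preserved upward, the second because it does not mention the interpretation at all), so everything hinges on $(1.a)$. But your proposal stops exactly there. ``Invoking the unfounded-free characterization of \cite{leon-etal-97b} at $M$'' presupposes that $X$ is unfounded w.r.t.\ $M$, i.e., that unfoundedness is monotone along $I \subseteq M$ --- and that monotonicity is the entire content of the theorem, not a routine verification one may defer. So the proof is incomplete at the one step that matters. For the record, the paper offers no argument either: its proof is a one-sentence citation of the characterization in \cite{leon-etal-97b}, so your self-contained minimality argument is a genuinely more explicit route; it simply is not finished.

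The gap is not only genuine --- under the paper's literal, total-interpretation reading it cannot be closed, because $(1.a)$ is \emph{anti}-monotone in the interpretation, and in fact the statement itself then fails. Take $\p$ consisting of the fact ${\tt b.}$ and the rule ${\tt a \derives b.}$, with $I = \emptyset$ and $X = \{{\tt a}\}$: the only rule whose head meets $X$ satisfies $(1.a)$, since $\posbody{\R_g} = \{{\tt b}\} \not\subseteq \emptyset$, so $X$ is unfounded w.r.t.\ $I$ by Definition~\ref{def:unfoundedset}; yet the unique answer set $M = \{{\tt a},{\tt b}\} \supseteq I$ has $M \cap X = \{{\tt a}\} \neq \emptyset$. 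The resolution is that in \cite{leon-etal-97b} interpretations are \emph{partial} (sets of literals) and ``$M \supseteq I$'' preserves falsity as well as truth; under that reading the analogue of $(1.a)$ (``some positive body atom is false w.r.t.\ $I$'') is monotone, $X$ remains unfounded w.r.t.\ $M$, and your own case analysis then closes verbatim into a complete proof (with $(1.a)$ read at $M$ contradicting $\posbody{\R_g} \subseteq M' \subseteq M$). The same happens whenever the unfoundedness of $X$ is witnessed without recourse to $(1.a)$, since the other three conditions are robust. So your instinct about $(1.a)$ being the delicate point was exactly right; what is missing is the recognition that no argument can neutralize it from the stated hypotheses alone --- one must strengthen the link between $I$ and $M$ to preserve falsity (the reading under which the citation of \cite{leon-etal-97b}, and the detailed proofs in \cite{alvi-etal-2009-TR}, actually operate), not merely ``read the conditions against $M$.''
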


We now prove the correctness of the $\DMS$ strategy by showing that it is
\emph{sound} and \emph{complete}.
In both parts of the proof, we exploit the following set of atoms.

\begin{definition}[Killed atoms]
\label{def:killed} Given a model $M$ for $\dmsqp$, and a model $N \subseteq M$ of $\ground{\dmsqp}^{M}$, 
the set $\killedmn$ of the \emph{killed atoms}
w.r.t.\ $M$ and $N$ is defined as: 
$$
\{\,{\tt p(\t)} \in \BP \setminus N \mid \mbox{either } {\tt p}
    \mbox{ is \EDB, or some } \magic{{\tt p^\alpha(\t)}} \mbox{ belongs to } N \,\}.
$$
\end{definition}

Thus, killed atoms are either 
false instances of some EDB predicate, or false atoms which are relevant 
for $\Q$ (since a magic atom exists in $N$).
Therefore, we expect that these atoms are also false in any answer set for $\p$ 
containing $M \cap \BP$. 

\begin{proposition}
\label{prop:killed_unfounded} Let $M$ be a model for $\dmsqp$, 
and $N \subseteq M$ a model of $\ground{\dmsqp}^{M}$. Then $\killedmn$ is an unfounded set
for $\p$ w.r.t.\ $M \cap \BP$.
\end{proposition}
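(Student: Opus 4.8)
The plan is to check the four disjuncts of Definition~\ref{def:unfoundedset} directly. I would fix an arbitrary ground rule $\R_g \in \ground{\p}$ with $\killedmn \cap \head{\R_g} \neq \emptyset$ and a witness ${\tt p(\t)} \in \killedmn \cap \head{\R_g}$, and then argue by contradiction, assuming that all four conditions fail, i.e.\ that $\posbody{\R_g} \subseteq M \cap \BP$, $\negbody{\R_g} \cap (M \cap \BP) = \emptyset$, $\posbody{\R_g} \cap \killedmn = \emptyset$, and $\head{\R_g} \cap ((M \cap \BP) \setminus \killedmn) = \emptyset$. The goal is to exhibit a head atom of $\R_g$ that lies in $N$, contradicting the failure of condition~$(3)$.

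First I would rule out an EDB witness. If ${\tt p}$ were EDB, then $\R_g$ would be one of the facts of $\EDB(\p) \subseteq \dmsqp$, which is retained in $\ground{\dmsqp}^{M}$ and hence forces ${\tt p(\t)} \in N$, contradicting ${\tt p(\t)} \in \killedmn \subseteq \BP \setminus N$. So ${\tt p}$ is IDB, and by Definition~\ref{def:killed} its magic version $\magic{{\tt p^\alpha(\t)}}$ lies in $N$ for the adornment $\alpha$ with which the algorithm processed $\R$; consequently the adorned rule $\R^a$, the magic rules it generates, and the modified rule $\R'$ all occur in $\dmsqp$. Note also that $\R$ is not a fact (it defines the IDB predicate ${\tt p}$), so all of its head atoms are IDB.

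The core step is to prove $\posbody{\R_g} \subseteq N$ by induction on the positive body atoms along the SIPS order $\prec^{{\tt p^\alpha(\t)}}_{\R}$. Take a positive body atom ${\tt q_i(\s_i)}$; by assumption it lies in $M \cap \BP$ but not in $\killedmn$. If ${\tt q_i}$ is EDB, Definition~\ref{def:killed} forces ${\tt q_i(\s_i)} \in N$ (otherwise it would be killed). If ${\tt q_i}$ is IDB, I use the magic rule that \textbf{\emph{Generate}} produces for ${\tt q_i^{\beta_i}(\s_i)}$: by the SIPS conditions (no negative literal or further head atom precedes anything, and the distinguished head atom is represented only by its magic version) its body consists of $\magic{{\tt p^\alpha(\t)}}$ and the positive body atoms preceding ${\tt q_i(\s_i)}$, all of which are in $N$ ($\magic{{\tt p^\alpha(\t)}}$ by the previous paragraph, the preceding atoms by the induction hypothesis). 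Since magic rules contain no negation they survive in $\ground{\dmsqp}^{M}$, so $N$ satisfies this rule and $\magic{{\tt q_i^{\beta_i}(\s_i)}} \in N$; Definition~\ref{def:killed} then again yields ${\tt q_i(\s_i)} \in N$. The very same magic rules, which \textbf{\emph{Generate}} also creates for the remaining (IDB) head atoms, show that every head magic atom $\magic{{\tt p_k^{\alpha_k}(\t_k)}}$ lies in $N$.

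Finally I would fire the modified rule $\R'_g$. Its negative body is $\negbody{\R_g}$, which is disjoint from $M$, so $\R'_g$ survives in $\ground{\dmsqp}^{M}$; the positive body of its reduct is the union of the head magic atoms and $\posbody{\R_g}$, all of which have just been placed in $N$. Because $N$ models the reduct, some head atom ${\tt p_k(\t_k)} \in \head{\R_g}$ must belong to $N \subseteq M$, hence to $(M \cap \BP) \setminus \killedmn$, contradicting the failure of condition~$(3)$. I expect the main obstacle to be exactly this propagation in the core step: the hypotheses give the positive body atoms only in the larger interpretation $M$, whereas the reduct rules must be satisfied by the smaller $N$, and bridging this gap hinges on re-deriving the relevant magic atoms inside $N$ in SIPS order while invoking the definition of killed atoms at each stage.
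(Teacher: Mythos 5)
Your proposal is correct and takes essentially the same route as the paper's proof (which is deferred to Proposition~3.15 of \cite{alvi-etal-2009-TR}): verify the unfounded-set conditions for an arbitrary rule by contradiction, re-derive the relevant magic atoms inside $N$ by induction along the SIPS order so that the killed-atom definition forces $\posbody{\R_g}$ and the head magic atoms into $N$, and then fire the modified rule in the reduct to obtain a head atom in $N$, contradicting the failure of condition~$(3)$. The only blemish is the parenthetical claim that $\R$ cannot be a fact because ${\tt p}$ is IDB---an IDB predicate may have some defining rules that are facts, as long as not all are---but this is harmless, since for such an $\R$ the modified rule $\dlrule{\tt p(\t)}{\magic{\tt p^\alpha(\t)}}$ is still in $\dmsqp$ and your final step yields ${\tt p(\t)} \in N$ immediately, with the induction vacuous.
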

\begin{proof}
See \cite{alvi-etal-2009-TR}, proof of Proposition~3.15.
\hfill $\Box$
\end{proof}

For proving the completeness of the algorithm we provide a construction for passing from an interpretation for $\p$ to one for $\dmsqp$.

\begin{definition}[Magic variant]
\label{def:magic_variant} Let $I$ be an interpretation for $\p$. We define an interpretation $\variantqpi{\infty}$ for
$\dmsqp$, called the magic variant of $I$ w.r.t.\ $\Q$ and $\p$, as the fixpoint of the following sequence:
$$
\begin{array}{lcl}
\variantqpi{0} & = & \EDB(\p) \\
\variantqpi{i+1} & = & \variantqpi{i} \cup \{ {\tt p(\t)} \in I \mid \mbox{some } {\magic{\tt p^\alpha(\t)}} \mbox{ belongs to } \variantqpi{i} \} \\
 & \cup & \{ {\magic{\tt p^\alpha(\t)}} \mid \exists \ \R_g^\magica \in \ground{\dmsqp} \mbox{ such that }\\
 & & \qquad {\magic{\tt p^\alpha(\t)}} \in \head{\R_g^\magica}  
	      \mbox{ and } \posbody{\R_g^\magica} \subseteq \variantqpi{i} \}, \ \ \ \forall i\geq 0
\end{array}
$$
\end{definition}

By definition, for a magic variant $\variantqpi{\infty}$ of an interpretation $I$ for $\p$, $\variantqpi{\infty} \cap \BP \subseteq
I$ holds. More interesting, the magic variant of an answer set for $\p$ is in turn an answer set for $\dmsqp$
preserving the truth/falsity of $\Q\vartheta$, for every substitution $\vartheta$.

\begin{lemma}
\label{lem:magic_variant_minimal_model}
For each answer set $M$ of $\p$, there is an answer set $M'$ of $\dmsqp$ (which is the magic variant of
$M$)  such that, for every substitution $\vartheta$, $\Q\vartheta \in M$ if and only if
$\Q\vartheta \in M'$.
\end{lemma}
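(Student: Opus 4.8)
The plan is to take $M' := \variantqpm{\infty}$ as the candidate and verify the two required properties: that $M'$ is an answer set of $\dmsqp$, and that $M'$ agrees with $M$ on every instance of $\Q$. The query part is the easy half. For $\Q\vartheta \in M' \Rightarrow \Q\vartheta \in M$, I would simply invoke $\variantqpm{\infty} \cap \BP \subseteq M$, noted immediately after Definition~\ref{def:magic_variant}. For the converse, I would use that the magic seed $\magic{\tt g^\alpha(\t)}$ produced by \textbf{\emph{BuildQuerySeed}} is a fact, hence belongs to $\variantqpm{1} \subseteq M'$; since its free positions are stripped, the magic version of any instance ${\tt g(\t)}\vartheta$ coincides with the seed, so whenever ${\tt g(\t)}\vartheta \in M$ the second clause of the fixpoint inserts ${\tt g(\t)}\vartheta$ into $M'$.

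First I would show $M'$ is a model of $\dmsqp$. Magic rules hold by construction: the third clause of the fixpoint forces the magic head atom into $M'$ exactly when the (purely positive) body of a magic rule is contained in $M'$. For a modified rule $\R'$ whose body is true in $M'$, I would transfer truth to the underlying instance $\R_g$ of the original rule: positive body atoms lie in $M' \cap \BP \subseteq M$, and for each negative literal ${\tt \naf\,q_l(\s_l)}$ the key point is that \textbf{\emph{Generate}} emits a magic rule for the adorned atom $q_l$ as well, so the magic atoms already present in the body of $\R'$ trigger $\magic{\tt q_l^{\beta_l}(\s_l)} \in M'$; relevance then gives ${\tt q_l(\s_l)}\notin M' \Rightarrow {\tt q_l(\s_l)}\notin M$. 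Thus the body of $\R_g$ is true in $M$, so $M$ (being a model of $\p$) contains some head atom, whose magic companion sits in the body of $\R'$ and is therefore in $M'$, forcing that head atom into $M'$ by the second clause of the fixpoint. The subtle point here, and the reason the result survives beyond stratified programs, is precisely that magic atoms are generated also for negative body atoms, so falsity of relevant atoms is faithfully mirrored.

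The core of the argument is minimality: any model $N' \subseteq M'$ of $\ground{\dmsqp}^{M'}$ must equal $M'$. I would combine the killed-set machinery with a stage argument. By Proposition~\ref{prop:killed_unfounded}, $\killed{M'}{N'}{\Q}{\p}$ is an unfounded set for $\p$ w.r.t.\ $M' \cap \BP$, and since $M$ is an answer set of $\p$ with $M \supseteq M' \cap \BP$, Theorem~\ref{theo:unfounded} yields $M \cap \killed{M'}{N'}{\Q}{\p} = \emptyset$. Now suppose $M' \setminus N' \neq \emptyset$ and pick an atom $a$ in it of least index in the defining sequence $\variantqpm{0} \subseteq \variantqpm{1} \subseteq \cdots$; every atom of strictly smaller index is then in $N'$. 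If $a$ is a magic atom, the positive magic rule introducing it has its whole body at a smaller stage, hence in $N'$, so $N'$ (a model of the unchanged magic rules) contains $a$, a contradiction. Hence $a$ is an IDB atom ${\tt p(\t)}$ inserted by the second clause, so its trigger $\magic{\tt p^\alpha(\t)}$ occurs at a smaller stage and lies in $N'$; therefore ${\tt p(\t)} \in \killed{M'}{N'}{\Q}{\p}$, while also ${\tt p(\t)} \in M' \cap \BP \subseteq M$, contradicting $M \cap \killed{M'}{N'}{\Q}{\p} = \emptyset$. (EDB atoms of $M'$ lie in $N'$ trivially, being facts of the reduct.) This shows $N' = M'$, so $M'$ is a minimal model of its reduct, hence an answer set.

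I expect the minimality step to be the main obstacle, since it is where the entanglement between the magic predicates (defined positively) and the original IDB predicates (defined by the possibly disjunctive, non-stratified modified rules) must be untangled; the least-stage choice is exactly what decouples them, letting the positive magic layer be dispatched by ordinary rule satisfaction and the original layer by the unfounded-set consequence of $M$ being an answer set. It is worth noting that this direction never appeals to super-consistency: we start from a given answer set $M$ of $\p$, so consistency is not at stake, and \ASPSC{} will only be needed for the converse inclusion.
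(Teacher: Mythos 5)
Your proposal is correct and follows essentially the paper's route: the candidate is the same magic variant $M' = \variantqpm{\infty}$, and the query agreement rests on the same killed-set/unfounded-set machinery (Proposition~\ref{prop:killed_unfounded} together with Theorem~\ref{theo:unfounded} applied to the answer set $M \supseteq M' \cap \BP$), with super-consistency correctly identified as unnecessary for this direction. The only difference is packaging: the paper delegates the proof that $M'$ is an answer set of $\dmsqp$ to Lemma~3.21 of \cite{alvi-etal-2009-TR}, whereas you reconstruct it in full (model property plus the least-stage minimality argument), and you derive the forward query direction directly from the magic seed and the second fixpoint clause rather than via $\killedmpmp$ --- both faithful, sound expansions of the same skeleton.
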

\begin{proof}
We can show that $M' = \variantqpi{\infty}$ is an answer set of $\dmsqp$
(see \cite{alvi-etal-2009-TR}, proof of Lemma~3.21).
Thus, since $\Q\vartheta$ belongs either to $M'$ or to $\killedmpmp$, 
for every substitution $\vartheta$,
the claim follows by Proposition~\ref{prop:killed_unfounded}.
\hfill $\Box$
\end{proof}

Proving the soundness of the algorithm requires quite more attention.
Indeed, if the technique is used for a program which is not \ASPSC{}, the 
rewritten program might provide some wrong answer.

\begin{example}\label{ex:unsound}
Consider the program
\begin{dlvcode}
\tt edb(a). \qquad \tt q(X)\ \Or\ p(X)\ \derives\ edb(X). \qquad 
\tt co(X)\ \derives\ q(X),\ \naf~co(X).
\end{dlvcode}%
and the query $\tt q(a)?$.
The program above admits a unique answer set, namely $\{{\tt edb(a),p(a)}\}$.
Applying $\DMS$ will result in the following program:
\begin{dlvcode}
\tt edb(a). \qquad \tt magic\_q^b(a). \qquad \tt magic\_p^b(X)\ \derives\ magic\_q^b(X).\\
\tt magic\_q^b(X)\ \derives\ magic\_p^b(X).\\
\tt q(X)\ \Or\ p(X)\ \derives\ magic\_q^b(X),\ magic\_p^b(X),\ edb(X).
\end{dlvcode}%
The rewritten program has two answer sets, namely
$\{{\tt magic\_q^b(a),magic\_p^b(a),}$ ${\tt edb(a),p(a)}\}$ and 
$\{{\tt magic\_q^b(a),magic\_p^b(a),edb(a),q(a)}\}$.
Therefore, $\tt q(a)$ is a brave consequence of the rewritten program
but not of the original program.
We note that the original program is not \ASPSC{}; indeed,
an inconsistent program can be obtained by adding the fact $\tt q(a)$.
\hfill $\Box$
\end{example}

The soundness of the algorithm for \ASPSC{} programs is proved below.

\begin{lemma}
\label{lem:extending_minimal_models} Let $\Q$ be a query over an \ASPSC{} program $\p$. 
Then, for each answer set $M'$ of $\dmsqp$, there is an answer set $M$ of $\p$ such that, for
every substitution $\vartheta$,
$\Q\vartheta \in M$ if and only if $\Q\vartheta \in M'$.
\end{lemma}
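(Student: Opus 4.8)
The plan is to construct the desired answer set $M$ of $\p$ from the given answer set $M'$ of $\dmsqp$ by exploiting super-consistency together with the killed-atoms machinery. The natural candidate is $M = (M' \cap \BP) \cup X$ for a suitable $X$, but the first move is to use the \ASPSC{} property directly: since $\p$ is super-consistent, the program $\p \cup (M' \cap \BP)$, obtained by adding the atoms of $M'$ restricted to $\BP$ as facts, is guaranteed to be consistent, so it has at least one answer set. I would let $M$ be such an answer set, so that $M \supseteq M' \cap \BP$ by construction (the added facts must be true in any answer set of $\p \cup (M' \cap \BP)$, and one checks that an answer set of this augmented program is also an answer set of $\p$ containing $M' \cap \BP$, since the added atoms are already present).

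\textbf{Establishing the query-equivalence.} With such an $M$ in hand, the goal reduces to showing that for every substitution $\vartheta$, $\Q\vartheta \in M$ if and only if $\Q\vartheta \in M'$. First I would observe that the query predicate $\tt g$ always receives a magic seed, so for every relevant $\vartheta$ the magic atom $\magic{\tt g^\alpha(\t)}\vartheta$ belongs to $M'$; this puts the instances $\Q\vartheta$ squarely into the scope of the killed-atoms definition. The forward direction ($\Q\vartheta \in M'$ implies $\Q\vartheta \in M$) is immediate, because $\Q\vartheta \in M' \cap \BP \subseteq M$. For the converse, suppose $\Q\vartheta \notin M'$. Since $N = M'$ is trivially a model of $\ground{\dmsqp}^{M'}$ (being the answer set itself), and since the magic atom for $\Q\vartheta$ is in $M'$, the atom $\Q\vartheta$ lies in $\killed{M'}{M'}{\Q}{\p}$. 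By Proposition~\ref{prop:killed_unfounded}, this killed set is an unfounded set for $\p$ w.r.t.\ $M' \cap \BP$, and since $M \supseteq M' \cap \BP$ is an answer set of $\p$, Theorem~\ref{theo:unfounded} forces $M \cap \killed{M'}{M'}{\Q}{\p} = \emptyset$, whence $\Q\vartheta \notin M$. This yields the contrapositive of the remaining direction.

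\textbf{The main obstacle} I anticipate is the step that turns super-consistency into exactly the right answer set, namely ensuring that some answer set of $\p$ contains $M' \cap \BP$ \emph{precisely} and does not accidentally make some $\Q\vartheta$ true that is killed. The unfounded-set argument above is what rules this out, but its validity hinges on $M'\cap\BP$ being contained in an answer set of $\p$ — this is where \ASPSC{} is essential and where Example~\ref{ex:unsound} shows the claim fails without it: there, adding $\tt q(a)$ as a fact yields an inconsistent program, so no answer set of $\p$ extends the projection of the spurious answer set, and the killed-set/unfounded-set reasoning cannot be invoked. I would therefore emphasize that super-consistency is invoked specifically to supply an answer set $M$ of $\p$ with $M \supseteq M' \cap \BP$, after which Proposition~\ref{prop:killed_unfounded} and Theorem~\ref{theo:unfounded} close the argument uniformly for both the soundness of brave and cautious reasoning. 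The routine verification that an answer set of $\p \cup (M'\cap\BP)$ is also an answer set of $\p$ I would relegate to a short remark, as it follows from the facts being already entailed.
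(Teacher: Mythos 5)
Your overall route is the paper's own: use super-consistency to obtain an answer set $M$ of $\p \cup (M' \cap \BP)$, observe $M \supseteq M' \cap \BP$, argue that $M$ is also an answer set of $\p$, and then close both directions of the equivalence via the killed set. Your elaboration of the second half is correct and actually more explicit than the paper's one-line conclusion: the magic seed is a ground fact of $\dmsqp$ (its magic version retains only the bound arguments), hence it belongs to $M'$ and witnesses, for every $\vartheta$ with $\Q\vartheta \notin M'$, that $\Q\vartheta \in \killedmpmp$ (taking $N = M'$, which is legitimately a model of $\ground{\dmsqp}^{M'}$); Proposition~\ref{prop:killed_unfounded} and Theorem~\ref{theo:unfounded} then exclude $\Q\vartheta$ from $M$, while $\Q\vartheta \in M'$ gives $\Q\vartheta \in M' \cap \BP \subseteq M$ immediately.

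The one genuine flaw is your closing remark that the verification that an answer set of $\p \cup (M' \cap \BP)$ is also an answer set of $\p$ is routine because ``the facts are already entailed.'' It is not, and that justification is wrong as stated: in general an answer set of $\p \cup \F$, for $\F$ a set of facts, fails to be an answer set of $\p$, even when $\p$ is \ASPSC{}. Take $\p = \{\, {\tt a} \derives {\tt b}. \ \ {\tt b} \derives {\tt a}. \,\}$, which is super-consistent (it is positive), and $\F = \{\, {\tt a.} \,\}$: the unique answer set of $\p \cup \F$ is $\{{\tt a},{\tt b}\}$, which is not an answer set of $\p$ (whose unique answer set is $\emptyset$). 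The step holds in the lemma only because $\F = M' \cap \BP$ for $M'$ an answer set of $\dmsqp$, whose modified rules are restrictions of rules of $\p$, so the atoms of $M' \cap \BP$ carry genuine support in $\ground{\p}$; moreover, minimality does not transfer mechanically, since for $N \subsetneq M$ a model of $\ground{\p}^{M}$ the candidate $N \cup (M' \cap \BP)$ need not be a model of $\ground{\p \cup (M' \cap \BP)}^{M}$ (enlarging an interpretation can make bodies of reduct rules true while their heads stay false). This is precisely the point where the paper does not argue inline but cites the proof of Lemma~3.16 of \cite{alvi-etal-2009-TR} --- it is a substantive lemma depending on the magic-set structure, not a short remark, and your proof has a hole exactly there.
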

\begin{proof}
Consider the program $\p \cup (M' \cap \BP)$, that is, the program obtained
by adding to $\p$ a fact for each atom in $M' \cap \BP$.
Since $\p$ is \ASPSC{}, there is at least an answer set $M$ for $\p \cup (M' \cap \BP)$.
Clearly $M \supseteq M' \cap \BP$; moreover, we can show that $M$ is an answer set
of $\p$ as well 
(see \cite{alvi-etal-2009-TR}, proof of Lemma~3.16).
Thus, since $\Q\vartheta$ belongs either to $M'$ or to $\killedmpmp$, 
for every substitution $\vartheta$,
the claim follows by Proposition~\ref{prop:killed_unfounded}.
\hfill $\Box$
\end{proof}

From the above lemma, together with Lemma~\ref{lem:magic_variant_minimal_model}, the
correctness of the Magic Set method with respect to query answering directly follows.

\begin{theorem}
\label{theo:dms_equivalence} Let $\p$ be an \ASPSC{} program, 
and let $\Q$ be a query. Then both $\dmsqp \bqequiv{\Q}
\p$ and $\dmsqp \cqequiv{\Q} \p$ hold.
\end{theorem}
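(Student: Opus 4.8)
The plan is to establish the two query-equivalences $\dmsqp \bqequiv{\Q} \p$ and $\dmsqp \cqequiv{\Q} \p$ by combining the two preceding lemmas. Recall that by definition, $\dmsqp \bqequiv{\Q} \p$ means $\Ans_b(\Q, \p \cup \F) = \Ans_b(\Q, \dmsqp \cup \F)$ for every set $\F$ of facts over the EDB predicates, and similarly for cautious reasoning. Since a substitution $\vartheta$ belongs to $\Ans_b(\Q,\p)$ exactly when $\Q\vartheta \in M$ for \emph{some} answer set $M$ of $\p$, and to $\Ans_c(\Q,\p)$ exactly when $\Q\vartheta \in M$ for \emph{every} answer set $M$, it suffices to show that the answer sets of $\p$ and of $\dmsqp$ agree on the truth of $\Q\vartheta$ for every $\vartheta$, in the sense of a truth-preserving correspondence between the two collections of answer sets.

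First I would make this correspondence precise. Lemma~\ref{lem:magic_variant_minimal_model} provides, for every answer set $M$ of $\p$, an answer set $M'$ of $\dmsqp$ (its magic variant) with $\Q\vartheta \in M \iff \Q\vartheta \in M'$ for all $\vartheta$. Conversely, Lemma~\ref{lem:extending_minimal_models}, which is where super-consistency is essential, provides for every answer set $M'$ of $\dmsqp$ an answer set $M$ of $\p$ with the same property $\Q\vartheta \in M \iff \Q\vartheta \in M'$. Together these two directions guarantee that $\Q\vartheta$ is true in some answer set of $\p$ if and only if it is true in some answer set of $\dmsqp$, and likewise that it is true in all answer sets of $\p$ if and only if it is true in all answer sets of $\dmsqp$. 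This yields $\Ans_b(\Q,\p) = \Ans_b(\Q,\dmsqp)$ and $\Ans_c(\Q,\p) = \Ans_c(\Q,\dmsqp)$.

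The remaining point is that query-equivalence quantifies over all EDB fact sets $\F$, so I must verify the argument is stable under adding such $\F$. The key observation is that $\DMS$ only rewrites the \IDB\ part of $\p$ while leaving $\EDB(\p)$ untouched (step~\emph{10} of the algorithm), and that adding EDB facts preserves membership in the class \ASPSC\ (by Definition of super-consistency, $\p$ super-consistent implies $\p \cup \F$ super-consistent, since any further facts $\F'$ give $(\p\cup\F)\cup\F' = \p\cup(\F\cup\F')$ which is consistent). Moreover, the magic seed and magic/modified rules produced for $\p\cup\F$ coincide with those for $\p$ because the adornment, generation, and modification phases depend only on the rule structure and the query, not on EDB facts. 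Consequently $\DMS(\Q,\p\cup\F) = \dmsqp \cup \F$, and applying Lemmas~\ref{lem:magic_variant_minimal_model} and~\ref{lem:extending_minimal_models} to the \ASPSC\ program $\p\cup\F$ gives the equality of answer sets for each fixed $\F$. I expect the main obstacle to lie not in this final assembly—which is essentially bookkeeping once the two lemmas are in hand—but in confirming that the two lemmas apply uniformly to $\p\cup\F$; in particular, one must note that Lemma~\ref{lem:extending_minimal_models} invokes super-consistency of the \emph{base} program, and this is exactly preserved under the addition of arbitrary facts, so no new case analysis is needed.
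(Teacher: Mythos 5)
Your proposal is correct and takes essentially the same route as the paper, which derives the theorem directly from Lemma~\ref{lem:magic_variant_minimal_model} (completeness via magic variants) and Lemma~\ref{lem:extending_minimal_models} (soundness via super-consistency). The only difference is that you explicitly verify the bookkeeping the paper leaves implicit---namely that \ASPSC{} is closed under adding EDB facts and that $\DMS(\Q,\p\cup\F)=\dmsqp\cup\F$, so the two lemmas apply uniformly to $\p\cup\F$---which is a sound and welcome elaboration, not a different argument.
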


\section{Implementation}\label{sec:system}

The Dynamic Magic Set method ($\DMS$) has been implemented and
integrated into the core of \dlv \cite{leon-etal-2002-dlv},
as shown in the architecture reported in Figure~\ref{fig:architecture}.


In our prototype, the $\DMS$ algorithm is applied automatically by default when the user invokes \dlv with {\tt -FB}
(brave reasoning) or {\tt -FC} (cautious reasoning)
together with a (partially) bound query. Magic Sets are not applied by default if the query does not contain any constant.
The user can modify this default behavior by specifying the command-line options {\tt -ODMS} (for applying Magic Sets) or {\tt -ODMS-} (for disabling magic sets).


Within \dlv, \DMS\ is applied immediately after parsing the program
and the query by the {\em Magic Set Rewriter} module.  The rewritten
program is then processed by the {\em Intelligent Grounding} module
and the {\em Model Generator} module using the standard \dlv
implementation. The only other modification with respect to standard
\dlv is for the output and its filtering: For ground queries, the
witnessing answer set is no longer printed by default, but only if
{\tt --print-model} is specified, in which case the magic predicates
are omitted from the output.

\begin{figure}[t]
 \centering
 \includegraphics[width=0.5\textwidth]{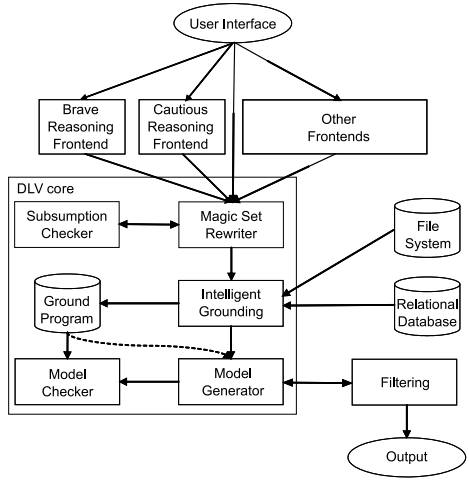}
 \caption{Prototype system architecture.}
 \label{fig:architecture}
\end{figure}

An executable of the \dlv system  supporting the Magic Set optimization is available at \url{http://www.dlvsystem.com/magic/}.

\section{Experimental Results}\label{sec:experiment}

In order to evaluate the impact of the proposed method, we have
compared $\DMS$ with the traditional \dlv evaluation without \emph{Magic Sets}
on several instances of the {\em Related} problem introduced in Section~\ref{sec:magic}.
In our benchmark, the structure of the ``genealogy'' graph consists of
a square matrix of nodes connected as shown in
Figure~\ref{fig:related}, and the instances are generated by varying
the number of nodes (thus the number of persons in the genealogy) of
the graph. We are interested in deciding whether the 
top-leftmost person can be an ancestor of the bottom-rightmost person
(i.e., the benchmark is designed for brave reasoning).
This setting has been used in \cite{grec-2003} for a
disjunctive, negation-free \ASP\ encoding.

\begin{figure}[t]
 \subfigure{\centering \hspace{-2em}\includegraphics[height=135pt]{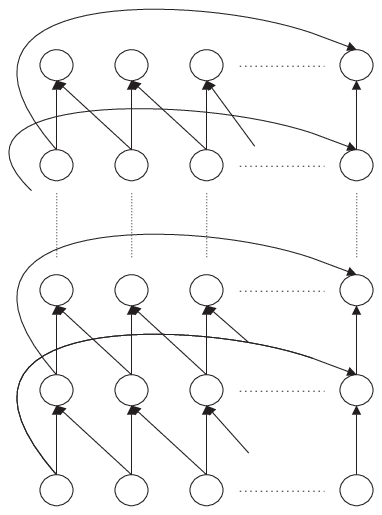}}
 \ \ \hfill
 \subfigure{\centering \includegraphics[height=135pt]{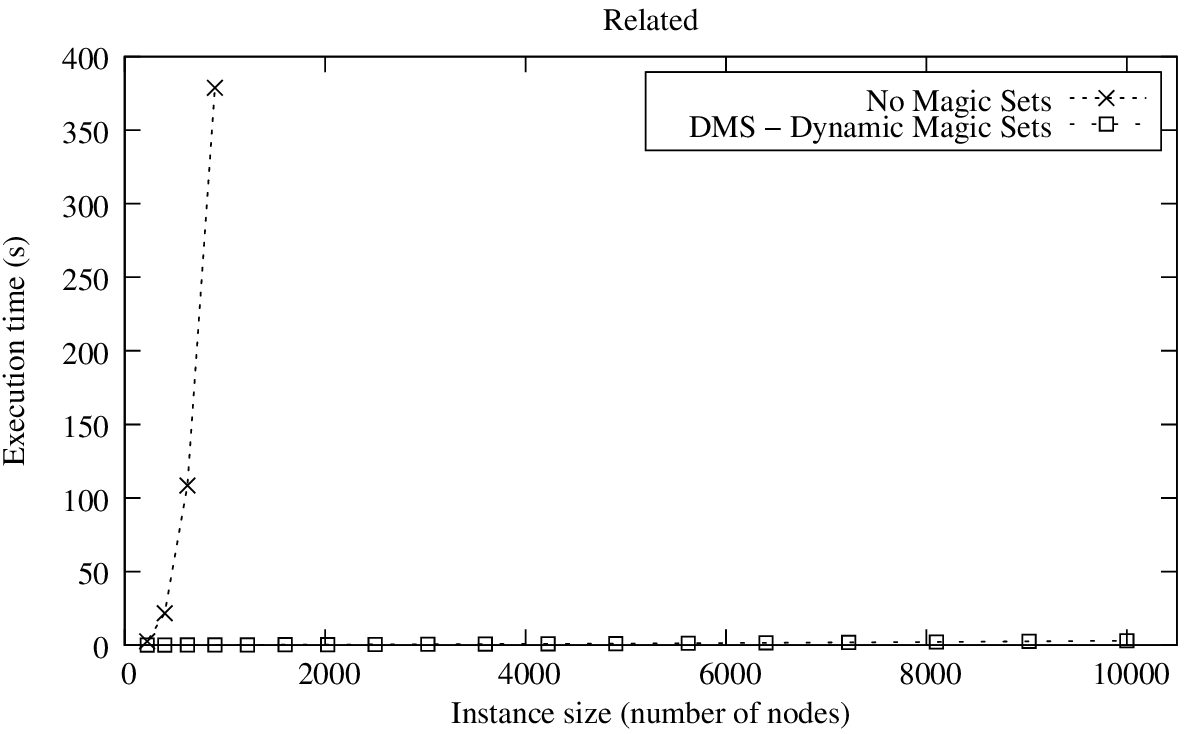}}
 \caption{{\em Related:} Instance structure (left) and average execution time (right).} \label{fig:related}
\end{figure}

The experiments have been performed on a 3GHz Intel$^{\scriptsize\textregistered}$ Xeon$^{\scriptsize\textregistered}$ 
processor system with 4GB RAM under the Debian 4.0 operating system with a GNU/Linux 2.6.23 kernel.
The \dlv prototype used has been compiled using GCC 4.3.3. For each instance, 
we have allowed a maximum running time of 600 seconds (10 minutes) and a 
maximum memory usage of 3GB.

The results for \emph{Related} are reported in
Figure~\ref{fig:related}. Without magic sets, \dlv solves only
the smallest instances, with a very steep increase in execution
time. 
In this case, the exponential computational gain of $\DMS$ over \dlv with no magic sets is due to the 
dynamic optimization of the model search phase resulting from our magic sets
definition. Indeed, $\DMS$ include nondeterministic relevance information
that can be exploited also
during the nondeterministic search phase of \dlv, dynamically
disabling parts of the ground program. In particular, after having
made some choices, parts of the program may no longer be relevant to
the query, but only because of these choices, and the magic atoms
present in the ground program can render these parts satisfied, which
means that they will no longer be considered in this part of the
search. 


\section{Conclusion}\label{sec:conclusion}

The Magic Set method is one of the most well-known techniques for the optimization of positive recursive $\dat$
programs due to its efficiency and its generality.
In this paper, we have elaborated on the issues addressed in \cite{alvi-etal-2009-TR}.
In particular, we have showed the applicability of $\DMS$ for \ASPSC{} programs.
With \DMS,
\ASP\ computations can exploit the information provided by magic set predicates
also during the nondeterministic stable model search, allowing for potentially 
exponential performance gains with respect to unoptimized evaluations.
\nop{Indeed, the definition of our magic set predicates depends on the assumptions
made during the computation, identifying the atoms that are relevant in
the current (partial) scenario.}

We have established the correctness of \DMS\ for \ASPSC{}
by proving that the
transformed program is query-equivalent to the original program.
A strong relationship between magic sets 
and unfounded sets has been highlighted: 
The atoms that are relevant w.r.t.\ a stable model are either true or form an 
unfounded set.

\DMS\ has been implemented in the \dlv system.  Experimental
activities on the implemented prototype system evidenced that our
implementation can outperform the standard evaluation in general also
by an exponential factor.  This is mainly due to the optimization of
the model generation phase, which is specific of our Magic Set
technique. However, we would like to point out that in general we
expect a trade-off between the larger ground program due to the
presence of ground magic atoms and its capability of pruning the
search space.
\nop{ In particular, if the additional ground magic atoms can
cause only an insignificant pruning effect, their presence may also
cause more overhead than the benefit from the pruning. However, in the
experiments that we performed, we did not observe such a behaviour so far.}

As a final point, we would like to point out the relationship of this
work to \cite{fabe-etal-2007-jcss}: There, a Magic Set method for
disjunction-free programs has been defined and proved to be correct
for consistent programs. First, that method will not work for programs
containing disjunction. Second, observe that consistent programs are
not necessarily in \ASPSC{}; indeed the method of
\cite{fabe-etal-2007-jcss} has to take special precautions for relevant parts
of the program that act as constraints (called \emph{dangerous rules})
and thus may impede a relevant interpretation to be an answer set. The
definition of \ASPSC{} implies that programs in this class cannot
contain relevant dangerous rules, 
which allows for the simpler \DMS{} strategy to work correctly.

\bibliographystyle{splncs}

\bibliography{bibtex,magic}

\end{document}